\documentclass[10pt, a4paper]{article}
\usepackage[a4paper,left=1.1in,top=1.2in,right=1.1in,bottom=1.3in]{geometry}

\usepackage[noadjust]{cite}
\usepackage[dvips]{graphicx}
\usepackage[caption=false]{subfig}
\usepackage{amsfonts,amsmath,amssymb,amsthm,mathrsfs}
\usepackage{xurl}
\usepackage{color,soul}
\usepackage{xcolor}
\usepackage{ragged2e}
\usepackage{changes}
\usepackage{comment}
\usepackage{pgfplots}
\pgfplotsset{compat=newest}
\usetikzlibrary{patterns}

\usepackage{algorithmic}
\def\BibTeX{{\rm B\kern-.05em{\sc i\kern-.025em b}\kern-.08em
    T\kern-.1667em\lower.7ex\hbox{E}\kern-.125emX}}

\DeclareMathOperator{\E}{\mathbb{E}}

\newcommand{\cS}{\mathcal S}
\newcommand{\cA}{\mathcal A}

\newcommand{\Magg}{M^{\text{\scriptsize ml}}}
\newcommand{\Sagg}{\mathcal S^{\text{\scriptsize ml}}}
\newcommand{\Aagg}{\mathcal A^{\text{\scriptsize ml}}}
\newcommand{\qagg}{q^{\text{\scriptsize ml}}}
\newcommand{\ragg}{r^{\text{\scriptsize ml}}}

\newcommand{\Psiagg}{\Psi^{\text{\scriptsize ml}}}

\newtheorem{lemma}{Lemma}

\newtheorem{assumption}{Assumption}

\newcommand{\footremember}[2]{%
    \footnote{#2}
    \newcounter{#1}
    \setcounter{#1}{\value{footnote}}%
}
\newcommand{\footrecall}[1]{%
    \footnotemark[\value{#1}]%
} 

\title{Scaling Power Management in Cloud Data Centers: A Multi-Level Continuous-Time MDP Approach}

\author{Behzad~Chitsaz\footremember{ECETehran}{School of Electrical and Computer
Engineering, College of Engineering, University of Tehran, Tehran, Iran.} \and 
        Ahmad~Khonsari\footrecall{ECETehran} 
        \footremember{IPM}{School of Computer Science, Institute for Research in Fundamental Sciences, Tehran, Iran.} \and Masoumeh~Moradian\footrecall{IPM} 
        \and Aresh Dadlani\footnote{Department of Electrical and Computer Engineering, Nazarbayev University, Astana, Kazakhstan, and Department of Computing Science, University of Alberta, Edmonton, Canada.} 
        \and Mohammad Sadegh Talebi\footnote{Department of Computer Science, University of Copenhagen, Copenhagen, Denmark.}
}

\date{\today}

\begin{document}

\maketitle

\begin{abstract}
Power management in multi-server data centers~especially at scale is a vital issue of increasing importance in cloud computing paradigm. Existing studies mostly consider thresholds on the number of idle servers to switch the servers on or off~and suffer from scalability issues. As a natural approach in view~of~the Markovian assumption, we present a multi-level continuous-time Markov decision process (CTMDP) model based on state aggregation of multi-server data centers with setup times that interestingly overcomes the inherent intractability of traditional MDP approaches due to their colossal state-action space.  The beauty of the presented model is that, while it keeps loyalty to the Markovian behavior, it approximates the calculation of the transition probabilities in~a way that keeps~the accuracy of the results at a desirable level. Moreover, near-optimal performance is attained at the expense of the increased state-space dimensionality by tuning the number of levels in the multi-level approach. The simulation
results were promising and confirm that in many scenarios of interest, the proposed approach attains noticeable improvements, namely a near 50\% reduction in the size of CTMDP while yielding better rewards as compared to existing fixed threshold-based policies and aggregation methods. 
\end{abstract}


\section{Introduction}
The inconceivable global surge in computing power demand from video streaming, cryptocurrencies, power-hungry artificial intelligence applications, and numerous cloud-connected devices has changed the operational landscape of data centers. Serving as indispensable powerhouses of the modern digital~era, a large portion of the expenditure is dedicated to cooling data center servers and equipment \cite{Sree2017}. Projected statistics on the~enormous power consumption of data centers reveal a harsh reality in spite of committing to more efficient technologies \cite{Miyuru2016}. In general, a server is said to be \textit{on} when~busy serving jobs. In absence of job requests, a server either remains \textit{idle} or is turned \textit{off}. While idle servers are ubiquitous~in~data centers, each amount for about $50$ to $60$ percent of the energy of its fully utilized state \cite{Gu2020}. Energy-aware cloud data centers minimize the power waste of idle servers~by~either switching them to a low-power \emph{standby} state or the inactive~\textit{off} state. In practice, however, turning the server back to the active~state (physical or virtual machine (VM)) incurs extra power consumption and transition delay, known as \textit{setup} or \textit{spin-up} time, which hinders immediate service to incoming job requests. Over-provisioning servers with jobs adds to the energy costs while under-provisioning may result in delayed service delivery time, thus violating the service-level agreement (SLA). In order to shorten the service delay while saving energy, it is therefore necessary to determine the optimal number of idle and setup servers in cloud data centers under different loads.

\subsection{State-of-the-Art and Prior Work}
The attention drawn towards the theoretical assessment of~energy management in multi-server systems has grown significantly over recent years. 
In \cite{hogade2021energy}, a game-theoretic approach~is~proposed for the workload management in geographically~distributed data centers considering the data transfer costs and queueing delay. 
In \cite{li2022dynamic}, a dynamic VM consolidation algorithm (EQ-DVMCA) is proposed which strikes the balance between the energy consumption and quality of service (QoS)~in cloud data~centers and provides efficient consolidation of virtual resources.~
In \cite{Thomas2019}, a hysteresis queuing model is presented to minimize power costs in cloud systems without explicitly considering server setup delay. 
The mean power consumption in systems with exponential setup time is studied in \cite{Gandhi2010} wherein various operational policies such as the \textsc{On/Idle} policy that turns no server off, the \textsc{On/Off} policy that turns off all servers immediately after becoming idle and has no limit on the number of setup servers, the \textsc{On/Off/Stag} policy that allows at most one server to be in setup at any point of time, and finally, the \textsc{On/Off/$k$ Stag} policy that permits at most $k$ servers to be in setup are investigated. In \cite{phung2019delay}, the authors analyzed the $k$-staggered policy that permits some servers to remain idle after setup using a three-dimensional continuous-time Markov chain (CTMC). The power consumption and waiting time distributions for the \textsc{On/Off} policy have been studied in \cite{Gandhi2013}. The work in \cite{Tuan2017} focused on the system queue length distribution and considered added policies that turn off servers with a finite delay, and also permit servers to go into the sleep mode which, compared to the off mode, induces lower setup time and power usage. In \cite{Cheng2021}, the authors defined two priority queues for different levels of delay-sensitivity and in the case of peak loads, the jobs with lower priority were deferred to promote QoS. However, all these policies  employ static or fixed-threshold approaches to manage the energy consumption of servers.


By partitioning the homogeneous physical machines into~three pools (\emph{hot}, \emph{warm}, and \emph{cold}) with different power levels, the~scalable model introduced in \cite{Longo2011} used interacting Markov chains and fixed-point iteration to derive the mean waiting time and power consumption. In \cite{Wang2015} and \cite{Chang2016}, the authors introduced~heterogeneity in the CPU cores requested by each VM, where the numbers of cores conform to uniform and general distributions, respectively. General spin-up time distribution was investigated in \cite{Gebrehiwot2016} and the mean performance measures and energy consumption of the switching policies were modeled using the \textit{M}/\textit{G}/1 queuing discipline. Though insightful, none of the above efforts explicitly prioritize delay over power consumption. 

The Markov decision process (MDP) framework \cite{puterman2014markov} offers a powerful mathematical approach to deriving optimal power-switching strategies in multi-server systems; see, e.g., \cite{Yang2011,Esa2018,phung2020delay,AALTO2019102034,Maccio2018}.
In \cite{Yang2011}, a MDP model is used to  minimize energy costs and rejected jobs in an Infrastructure-as-a-Service (IaaS) cloud system. A near-optimal solution is~proposed in \cite{Esa2018} for power switching of two dynamic servers to minimize power consumption, delay, and wear-and-tear costs using MDP and look-ahead approach. The MDP approach is also used in \cite{phung2020delay} to find the optimal policy for cost-performance trade-off in virtualized data centers, where VMs are modeled by process sharing queues. Moreover, the size of the proposed MDP is reduced using fixed thresholds to categorize the load of servers into three light, moderate, and high levels. In \cite{AALTO2019102034}, an optimal job routing policy is proposed based on the Whittle index in a system of parallel servers, where each server follows an \textsc{On/Off} policy and is equipped with an infinite buffer. The authors in \cite{Maccio2018} conducted a detailed study on the performance of different multi-server power management policies. Denoting the job arrival and service rates as $\lambda$ and $\mu$, respectively, they showed that keeping $\lambda/\mu + \sqrt{\lambda/\mu}$ servers always on results in a near-optimal solution.

Despite their potential and competence in modeling resource management in multi-server systems, the MDP approaches suffer from curse of dimensionality in the case of hyperscale cloud data centers. More precisely, the associated state-action spaces could grow very large even for a moderate number of servers and queue sizes, thus making standard solution methods (e.g., value iteration \cite{puterman2014markov}) for solving MDPs intractable.~To overcome the curse of dimensionality, \emph{state aggregation} and \emph{state abstraction} methods have been widely studied in the literature (e.g., \cite{Ren2002, li2006towards, Hutter2016, abel2016near, Taylor2008, Abel2016, Saldi2015}), which aim to derive smaller MDPs via merging states that are similar in terms of, e.g., model parameters (transition probabilities and rewards), value functions, etc. So far, a plethora of state aggregation and abstraction methods have been presented and analyzed in the literature in both  discounted (e.g., \cite{Taylor2008, Abel2016}) and undiscounted (e.g., \cite{Saldi2015}) settings. However, these methods often consider generic MDPs and are thus oblivious to the structure of MDPs arising in data canter problems. To our best knowledge, there exists no reported work on state aggregation for optimal energy management in large-scale data centers that incorporates the intrinsic structure of the underlying system. 

Finally, it is worth mentioning that some studies investigate resource management in data centers using reinforcement learning (RL) approaches \cite{ran2019deepee,yi2019toward,zhang2017electricity,islam2015online,ran2022optimizing}. These approaches were used to deal with uncertainty in system parameters or to combat 
the curse of dimensionality due to large state-action spaces. For instance, the DeepEE optimization framework in \cite{ran2022optimizing} utilizes deep RL for jointly optimizing energy consumption in task scheduling and cooling control within a data center. In \cite{yi2019toward}, a deep RL-based  allocation algorithm in the presence of long-lasting and compute-intensive jobs is derived. Methods to minimize energy costs associated with performing tasks with deadlines on data center servers have been investigated in \cite{zhang2017electricity}, which involves scheduling the tasks during periods of low energy costs. Lastly, the authors of \cite{islam2015online} introduce an online resource management framework, termed energy budgeting, specifically designed for virtualized data centers. 
While use of RL approaches allows one to deal with unknown system parameters, it should be stressed that derived resource management policies in these works often fail to admit 
performance guarantees, and could be far from optimal in terms of rewards. 

\subsection{Main Contributions}
In this paper, a multi-server system is considered where the servers experience setup delays and the jobs arrive according to a Poisson process. In the proposed  model, a power manager is responsible for turning the servers on or off to manage the power usage of the entire system or the number of waiting jobs. We aim to find an approximately optimal power-switching policy in such a system that not only minimizes the weighted sum of power consumption of the servers and average delay of the jobs, but also characterizes the trade-off between the dimensionality of~the system state space and closeness to the optimal performance.
To this end, we present the following contributions: 
\begin{itemize}
    \item We introduce a \textit{continuous-time MDP} (CTMDP) formulation of~the system, called the \emph{basic CTMDP}, which~facilitates the derivation of an optimal power-switching policy in our system model, and can be of independent interest~as an accurate model for resource allocation in cloud data centers.
    \item The basic CTMDP suffers from the curse of dimensionality. By employing state aggregation, we propose an efficient approximate CTMDP, referred to as \textit{multi-level CTMDP}, to reduce the size of state-action space of the basic CTMDP. However, unlike the classical state aggregation approaches, we leverage the intrinsic structure in the basic CTMDP to perform aggregation more efficiently. Though increasing the number of levels in the multi-level CTMDP yields higher dimensionality, the performance of the resulting policy becomes closer to the optimal policy derived from the basic CTMDP model.
    \item The proposed multi-level CTMDP is benchmarked against the uniform state aggregation method and fixed-threshold policies in \cite{Maccio2018} under different settings with precedence of delay over power. We show the better performance of our model in terms of the achieved expected average rewards. Our simulation results confirm that in many~scenarios of interest, the proposed approach attains a near $50\%$ reduction in the size of CTMDP while yielding better rewards as compared to conventional methods.
\end{itemize} 


The rest of this paper is structured as follows. Section~\ref{sec2} presents the system model and assumptions. Section~\ref{sec3} details the basic CTMDP as the optimal solution, followed~by the~proposed state-aggregated multi-level CTMDP in Section~\ref{sec4}. The transition rate function and reward function of the multi-level CTMDP are derived in Section~\ref{sec5}. Numerical results are discussed in Section~\ref{sec6}. Finally, Section~\ref{sec7} concludes the paper.

\section{System Model}
\label{sec2}
Consider a data center comprising $C$ servers that serve jobs arriving at the system and a power manager that switches the servers on and off independently. When on, a server is in one~of the three states: \textsc{Busy}, \textsc{Idle}, or \textsc{Setup}. Likewise, a server~is~in the \textsc{Off} state if powered off by the manager. 
The Poisson process has been shown to be an acceptable approximation for job arrivals in data centers \cite{di2014characterizing}. Thus, we assume job arrivals follow a Poisson process with rate $\lambda$. 
Furthermore, the service time of each job is exponentially distributed with rate $\mu$. A server can process a job immediately only if it is in the \textsc{Idle} state. Jobs that fail to receive service instantly upon arrival wait in a finite queue of capacity $Q$ until served in a first-in-first-out (FIFO) manner. The state transition diagram of a server is shown in \figurename{~\ref{fig1}}.
\begin{figure}[!t]
	\centering
	\includegraphics[width=.7\columnwidth]{./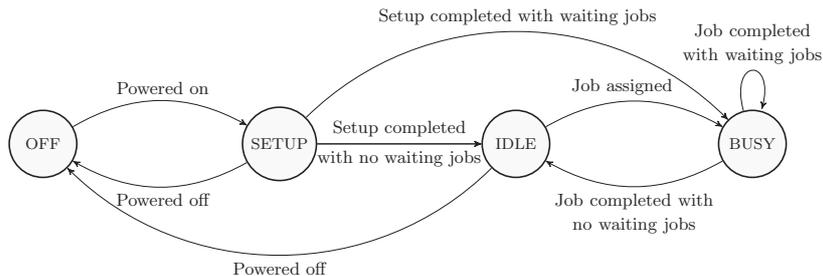}
	\vspace{-1.3em}
	\caption{Power state transition diagram of a data center server.}
	\label{fig1}
\end{figure}

When powered on, the server enters the \textsc{Setup} state and stays there for some amount of time called the \textit{spin-up} time,~which follows an exponential distribution with rate $\gamma \!>\! 0$. Upon~completion of the setup process or the service process of a \textsc{Setup} or \textsc{Busy} server, respectively, the server transitions to the \textsc{Busy} state if there is a head-of-line (HoL) job in the queue or enters the \textsc{Idle} state, if otherwise. Finally, an \textsc{Idle} server either becomes \textsc{Busy} if a job is assigned to it or is switched off by the power manager. Note that the power manager can only turn off servers that are either in \textsc{Idle} or \textsc{Setup} states. Once entering the queue, a newly arrived job is served immediately only if  no other job exists in the queue and an \textsc{Idle} server is available. In other words, the manager assigns the HoL job to an \textsc{Idle} server as soon as it is made available. This happens whenever a \textsc{Setup} server finishes its setup process or a \textsc{Busy} server finishes its service time. 

Although having more \textsc{Idle} and \textsc{Setup} servers results in higher power consumption, it also leads to fewer waiting jobs and more immediate services. In this regard, the average number of \textsc{Idle} and \textsc{Setup} servers can serve as an indicator of the \textit{power penalty} of the system, while the average number of~waiting jobs can be interpreted as the \textit{performance penalty}. Our aim is to determine a power-switching policy, governed by the power~manager, that minimizes a weighted sum of the power and performance penalties.

\section{Basic CTMDP Formulation}
\label{sec3}
The model presented in Section \ref{sec2} constitutes a dynamical~system, where its state evolves as a Markov process, due to the memoryless property of the assumed arrival, setup, and service processes. Moreover, the decisions are made at continuous~time instants when the system state changes. Hence, we may characterize the decision process as a CTMDP \cite{guo2009continuous,puterman2014markov}. An infinite-horizon CTMDP $M$ under the average-reward criterion is a tuple $M=(\mathcal S, \mathcal A, q,r)$, where $\mathcal{S}$ denotes the (finite) state space, and where $\mathcal A = (\mathcal A_s)_{s\in \mathcal S}$ denotes the (finite) action space with $\mathcal A_s$ defining the set of actions available at state $s$. Further, $q$ denotes the transition rate function such that $q(s'|s, a)$ is the transition rate to state $s'\in \mathcal S$ when executing action $a\in \cA_s$ in state~$s\in \mathcal S$. Finally, $r$ denotes the reward function such that $r(s, a)$ is the reward obtained when selecting action  $a\in \cA_s$ in state $s\in \mathcal S$. The various components of the CTMDP $M$ corresponding to the model in Section \ref{sec2} will be specified below. 

{\textbf{The State Space $\cS$.}} We define the system state as $s \!\triangleq\! (b,i)$, where $b \!\in\! \{0, \ldots, C\}$ denotes the number of \textsc{Busy} servers, whereas $i \!\in\! \{-Q,-Q+1,\ldots,C\}$ represents the number of \textsc{Idle} servers if $i \!\geq\! 0$, or the \emph{negated} number of waiting jobs if $i\!<0$. Hence, $|i|$ indicates the number of \textsc{Idle} servers or waiting jobs, and its sign determines whether we have \textsc{Idle} server(s) or waiting job(s). Hereafter, we call $b$ the \emph{B-component} of $s$ with `B' signifying `\textsc{Busy}', whereas we refer to $i$ as the \emph{I/W-component}, where `I' and `W' stands for `\textsc{Idle}' and `waiting', respectively.

{\textbf{The Action Space $\cA$.}} 
At any state, the manager decides on the number of servers put in the \textsc{Setup} and \textsc{Idle} modes at the instants of job arrival, job completion, and setup completion. Concretely, at $s\!=\!(b,i)\in \cS$, taking action $a\in \cA_s$ corresponds to having $a$ \textsc{Setup} servers in $s$ if $a\ge 0$, or to turning off $|a|$ \textsc{Idle} servers if $a<0$. In other words, when $a\ge 0$, the manager sets exactly $a$ servers in the \textsc{Setup} mode. To this end, if the~number of current \textsc{Setup} servers at the decision time exceeds $a$, then the extra servers are powered off. Otherwise, some servers are turned on to have $a$ \textsc{Setup} servers. We distinguish between two cases depending on the sign of $i$. When $i\ge 0$, there are $i$ \textsc{Idle} servers, and hence at most $i$ servers can be turned off so that $a\!\geq\!\!-i$. Moreover, a maximum of  $C\!-\!b\!-\!i$ servers could be in the \textsc{Setup} mode. Thus, $\mathcal{A}_s\!=\!\{-i, \ldots, C\!-\!b\!-\!i\}$. 
On the other hand, when $i\!<\!0$, there is no \textsc{Idle} server (thus, $a\!\geq\!0$) and the manager can only manage the number of \textsc{Setup} servers, which can be at most $C-b$. Hence,  $\mathcal{A}_s\!=\!\{0, \ldots, \!C\!-\!b\}$. 
In general, $\mathcal{A}_s \!=\! \{-i^+,\ldots, C \!-\!b \!-\! i^+\}$, where $i^+ \!\triangleq\! \max{(i,0)}$. As shown in \cite[Theorem 3]{Maccio2018}, the optimal policy always turns on servers following a bulk setup policy; when it decides to turn on some servers, i.e., to put them in the \textsc{Setup} mode, it turns on all \textsc{Off} servers. Hence, we have $\mathcal{A}_s \!=\! \{-i^+,\ldots,0\} \cup \{C \!-\! b \!-\! i^+\}$.


{\textbf{The Transition Function $q$.}} Consider $s=(b,i)\in \cS$, $s'=(b',i')\in \cS$, and $a\in \cA_s$. To determine $q(s'|s, a)$, we consider two cases based on the sign of $a$. For $a\!<\!0$, an \textsc{Idle} server must exist, implying $i\!>\!0$. Since $|a|$ represents the number of \textsc{Idle} servers to be turned off, we have $i + a \geq 0$. Hence,
\begin{equation}
	q(s'|s,a)  \!=\!
	\begin{cases}
		{\lambda};   &  \!b'\!= b+1,~i'\!=\!i+a-1,~i+a>0, \vspace{-0.2em} \\
		{\lambda};   &  \!b'\!= b,~i'\!=\!-1,~i+a=0, \vspace{-0.2em} \\
		{b\mu};      &  \!b'\!= b-1,~i'\!=\!i+a+1,~b > 0.
	\end{cases}
	\label{eq11}
\end{equation}
To verify \eqref{eq11}, note that upon arrival of a new job, the number of remaining \textsc{Idle} servers decreases by one after taking action $a$ (if $i+a\!>\!0$) and hence, the number of \textsc{Busy} servers increases by one. If $i+a\!=\!0$, then $i'\!=\!-1$ implying that the arriving~job waits in the queue. Finally, completion of a job with rate $b\mu$ increases and decreases the number of \textsc{Idle} and \textsc{Busy} servers by one, respectively. When $a\!\geq\!0$, where a total of $a$ servers will be in the \textsc{Setup} mode, we have: 
\begin{equation}
q(s'|s,a)  \!\!=\!\begin{cases}
		{\lambda};   &  b'\!=\!b+1,~ i'\!=\!i-1, ~i\!>\! 0, \vspace{-0.15em} \\
		{\lambda};   &  b'\!=\!b,~ i' \!=\!i-1,~ -Q < i\! \le\! 0, \vspace{-0.15em} \\
		{\lambda};   &  b'\!=\!b,~ i' \!=\!i,~ i\!=\!-Q, \vspace{-0.15em} \\
		{b\mu};      &  b'\!=\!b-1,~ i'\!=\!i+1,~ i\!\ge\! 0, \vspace{-0.15em} \\
		{b\mu};      &  b'\!=\!b,~ i'\!=\!i+1,~ i\!<\! 0, \vspace{-0.15em} \\
		{a\gamma}; 	 &  b'\!=\!b,~ i'\!=\!i+1,~ i\!\ge\! 0, \vspace{-0.15em} \\
		{a\gamma};   &  b'\!=\!b+1,~ i'\!=\!i+1,~ i\!<\! 0.
	\end{cases}
	\label{eq1}
\end{equation}
The first case in \eqref{eq1} is verified by noting that when $i\!\ge\!0$, 
then upon arrival of a new job, the number of \textsc{Idle} and \textsc{Busy} servers decreases and increases by one, respectively. However, in the second case where $i\! \leq \!0$, the number of \textsc{Busy} servers remains unchanged, but the number of waiting jobs increases by one. Specifically, we have $i' = i - 1$ under the condition $-Q <~i$ (i.e., the queue is not full). Otherwise, if $i = -Q$, then the~number of waiting jobs remains unchanged and the newly arriving job is dropped. In the fourth and fifth cases of \eqref{eq1}, a \textsc{Busy} server transitions to an \textsc{Idle} state at rate $b\mu$, while the last two cases signify that a \textsc{Setup} server becomes \textsc{Idle} at a rate of $a\gamma$. In~either case, the new \textsc{Idle} server remains \textsc{Idle} if $i \geq 0$, and transitions back to \textsc{Busy} to serve a waiting job if otherwise. Note that the packet arrival at state $b \!=\! C$ is included in the second and third cases. As such, when $b \!=\! C$, we have $i \leq 0$ and the arrived packet awaits in the queue if it is not full.

{\textbf{The Reward Function $r$.}} Let $\Psi(s,a)$ denote sum of output transition rates at state $s=(b,i)$ under action $a\in \cA_s$: 
$	\Psi(s,a) = a^+\gamma + b \mu + \lambda$. The quantity $1/\Psi(s,a)$ indicates the average time spent at state $s$ under action $a$. Observe that $|i^-|$ (with $x^-\!\triangleq\!\min{(x,0)}$ for any $x$) is linked to the performance penalty, whereas the power penalty may be defined using a weighted sum of the number of \textsc{Idle} servers, $i^+$, and the number of \textsc{Setup} servers, $a^+$. 
In order to define a reward function in line with the objective discussed in Section \ref{sec2}, we define:
\begin{equation}
	r(s,a) = \frac{-1}{\Psi(s,a)}\big(c_{\text{perf}}\,|i^-| + c_{\text{power}} i^+\!+  c_{\text{power}}' a^+\! \big) ,
	\label{eq4}
\end{equation}
where $c_{\text{perf}},c_{\text{power}}$, and $c_{\text{power}}'$ are application-specific positive numbers.
In fact, $r(s,a)$ in (\ref{eq4}) is a weighted sum of performance and power penalties scaled by the average dwell time at $(s,a)$. Evidently, the ratio $c_{\text{power}}/c_{\text{power}}'$ controls the relative emphasis put on the power consumption of \textsc{Idle} and \textsc{Setup} modes.
It is worth remarking that, firstly, we do not model the cost of job dropouts in the reward function since, due to the precedence of performance penalty over power penalty, the optimal policy never results in considerable job dropout probabilities, especially in the presence of a finite but sufficiently large queue size. Secondly, the analytical approach considered here is suited for moderate and light traffics, where the jobs do not occupy all servers almost surely and thus, the implementation of power management policies is justified.

Undertaking the approach in Chapter 11 of \cite{puterman2014markov}, the CTMDP $M$ can be transformed into a discrete-time average-reward MDP and can be solved using standard methods such as policy iteration, value iteration, or linear programming. The following lemma presents the time complexity of solving the associated MDP to $M$.

\begin{lemma}
\label{lem:CTMDP_time}
The (per-step) time complexity of solving the MDP associated to $M$ using value iteration is $O(C^2(Q+C^2)(Q+C))$.
\end{lemma}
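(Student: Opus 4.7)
The plan is to bound the per-iteration cost of value iteration applied to the discrete-time MDP obtained from $M$ via uniformization (Chapter 11 of \cite{puterman2014markov}). Each iteration performs one Bellman backup per state-action pair, and, without exploiting the explicit sparsity of $q$, each backup evaluates $\sum_{s'} P(s'|s,a) V(s')$ in $O(|\cS|)$ time. Hence the per-step cost is at most $O\!\left(|\cS|\cdot \sum_{s\in\cS}|\cA_s|\right)$, and the claim reduces to the two enumeration bounds $|\cS|=O(C(Q+C))$ and $\sum_{s\in\cS}|\cA_s|=O(C(Q+C^2))$.

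First I would count states, respecting the physical cap $b+i^+\le C$ imposed by $b$ \textsc{Busy} plus $i^+$ \textsc{Idle} servers not exceeding $C$. For each $b\in\{0,\dots,C\}$ the I/W-component $i$ ranges over $\{-Q,\dots,C-b\}$, which has $Q+C-b+1$ values; summing over $b$ yields $|\cS| = (C+1)Q+\tfrac12(C+1)(C+2) = O(C(Q+C))$.

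Next I would count action-set sizes using the bulk-setup simplification $\cA_s=\{-i^+,\dots,0\}\cup\{C-b-i^+\}$, which gives $|\cA_s|\le i^++2$. Splitting the sum according to the sign of $i$, states with $i<0$ contribute at most $2Q(C+1)=O(CQ)$, while states with $i\ge 0$ contribute at most
\[
\sum_{b=0}^{C}\sum_{i=0}^{C-b}(i+2) \;=\; O\!\left(\sum_{b=0}^{C}(C-b)^2\right)\;=\;O(C^3).
\]
Adding the two contributions gives $\sum_s|\cA_s|=O(CQ+C^3)=O(C(Q+C^2))$. Multiplying by $|\cS|$ then yields $O\bigl(C(Q+C)\cdot C(Q+C^2)\bigr)=O\bigl(C^2(Q+C^2)(Q+C)\bigr)$, as claimed.

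The main obstacle is really bookkeeping rather than any substantive inequality: one must enumerate $\cA_s$ without double-counting the bulk-setup action when $C-b-i^+\in\{-i^+,\dots,0\}$ (which only affects constants), and apply the cap $b+i^+\le C$ consistently in both counts. As a sanity check, the one-off preprocessing required by uniformization—computing $\eta=\max_{s,a}\Psi(s,a)$ and the discretized transition kernel from $q$—costs only $O(\sum_s|\cA_s|)$ and is absorbed into the stated per-iteration bound.
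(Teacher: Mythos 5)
Your proposal is correct and follows essentially the same route as the paper: bound the per-iteration cost by $O(|\cS|\sum_{s}|\cA_s|)$, then enumerate states and action-set sizes by splitting on the sign of $i$, obtaining $|\cS|=O(C(Q+C))$ and $\sum_s|\cA_s|=O(CQ+C^3)$. The only difference is cosmetic bookkeeping (you sum over $b$ first where the paper partitions $\cS$ into $i>0$ and $i\le 0$ and computes the exact counts), so there is nothing substantive to add.
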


\begin{proof}
The time complexity of an iteration in the value iteration algorithm is $O(|\mathcal S|\sum_{s\in\mathcal{S}}|{\mathcal{A}_s}|)$ \cite{puterman2014markov}. Introduce $\mathcal S_+:=\{(b,i)\in \mathcal S, i>0\}$ and $\mathcal S_-:=\{(b,i)\in \mathcal S, i\le 0\}$. Then, $\mathcal S_+$ and $\mathcal S_-$ define a partition of $\mathcal S$ so that  $|\mathcal{S}|= |\mathcal{S}_+|+|\mathcal{S}_-|$. Furthermore,   $\sum_{s\in\mathcal{S}}|{\mathcal{A}_s}|=\sum_{s\in\mathcal{S}_+}|{\mathcal{A}_s}|+\sum_{s\in\mathcal{S}_-}|{\mathcal{A}_s}|$. We consider two cases:
\begin{itemize}
    \item \textit{When $s=(b,i) \in \mathcal S_-$:} Recalling that $i\in\{-Q,\ldots,0\}$ and $b\in \{0,\ldots,C\}$, we have $|\mathcal{S}_-|=(Q+1)(C+1)$. Furthermore, $\mathcal A_s = \{0, a=C\!-\!b\!-\!i\}$ so that  $\sum_{s\in\mathcal{S}_-}|{\mathcal{A}_s}|=2(Q+1)(C+1)$.
    \item \textit{When $s=(b,i) \in \mathcal S_+$:} Recalling that $0 \leq b \leq C-i$, we have $|\mathcal{S}_+|=\sum_{i=1}^{C}(C-i+1)=C(C+1)/2$. Also, ${\mathcal{A}_s}=\{-i,\ldots,0\}\cup\{C-b\}$, and thus, $|{\mathcal{A}_s}|=i+2$.  Consequently, $\sum_{s\in\mathcal{S}_+}|{\mathcal{A}_s}|=\sum_{i=1}^{C}(C+1-i)(i+2)=C(C+1)(C/6+4/3)$.
\end{itemize}
  Putting them together, we obtain  $|\mathcal{S}|\sum_{s\in\mathcal{S}}|{\mathcal{A}_s}|=(C+1)^2(Q+C/2+1)(C^2/6+4C/3+2Q+2) = O(C^2(Q+C^2)(Q+C))$, thus concluding the proof. 
\end{proof}

\section{Approximation via State Aggregation: Multi-level CTMDP}
\label{sec4}
We focus on large-scale data centers, where $C$ represents a significant number. For instance, in December 2014, Amazon Web Services operated approximately $C=1.4$ million servers across 28 availability zones. According to Lemma \ref{lem:CTMDP_time}, computing an optimal policy in the exact model $M$ for such data centers incurs a cost of $O(C^5)$, even for moderate values of $Q$, which~is unfeasibly large.  We remedy this issue by introducing a more manageable approximation of $M$ with a reduced state-action space, known as the multi-level CTMDP. 


\subsection{Assumptions and Approximation}
\label{sec4.1}
We first need to introduce the queuing approximation for our system captured by the following assumptions:

\begin{assumption}
\label{assump:large_C}
The number of servers, $C$, is large~enough to serve all arriving jobs.
\end{assumption}

Assumption \ref{assump:large_C} is justified since the power-switching policy is used for systems with low to moderate traffic. In other~words, each job finds at least one \textsc{Off}, \textsc{Idle}, or \textsc{Setup} server upon its arrival, and the probability of all servers being \textsc{Busy} is~infinitesimal. Otherwise, the power manager would need to keep the servers permanently powered on to accommodate as many jobs as possible, in which case implementing the power-switching policy would be meaningless. 

\begin{assumption}
\label{assump:opt_policy}
The optimal policy always turns on at least~$|i|$ servers, i.e., $a \geq |i|$ for states with waiting jobs ($i < 0$).
\end{assumption}

This assumption bears resemblance to the $k$-staggered policy discussed in \cite{Maccio2018}. Thus, for any state $s=(b,i)$ with $i\!<\!0$, the action space $\mathcal{A}_s\!=\!\{a| -i\!<\!a\!<\!C\!-b\}$, ensuring that for any $s=(b,i)\in\cS$, $\mathcal{A}_s=\{a| -i\!\leq\!a\!\leq\!C\!-b\!-i^+\}$.~For~now, let us assume that the setup delay is zero. In such a case, given that turning the servers on and off does not induce any costs, the optimal power-switching policy turns the servers off upon being \textsc{Idle} since they can become immediately \textsc{Idle} again when needed. Then, considering Assumptions~\ref{assump:large_C} and \ref{assump:opt_policy}, the proposed queueing system can be assumed to have infinite number of servers, and due to exponential inter-arrival and service times, it can be modeled as an $M/M/\infty$ queue. 

In an $M/M/\infty$ queue, the number of \textsc{Busy} servers in the steady state follows the Poisson distribution with rate $\rho \!=\!\lambda/\mu$. On the other hand, in realistic scenarios, with non-zero setup delays, some jobs which do not find an \textsc{Idle} server upon their arrival incur a delay before joining a server, which depends on the system state and the taken action. However, according to Assumption~\ref{assump:opt_policy}, the waiting time never exceeds a setup delay~(in a stochastic sense). In this regard, in the real system, we stick~to the approximation that the number of \textsc{Busy} servers in the steady state, denoted by $b$, follows the Poisson distribution with rate $\rho \!=\!\lambda/\mu$. This assumption helps us to analytically derive the transition probabilities of the proposed multi-level CTMDP, for which our numerical results show that the resulting optimal policy outperforms previously studied counterparts. Besides, such an approximation becomes more accurate when the implemented policy results in smaller waiting time for the jobs by having more \textsc{Setup} and \textsc{Idle} servers. Finally, since $\rho$ is large enough in our problem (e.g., $\rho>10$), the Poisson~distribution behaves similarly to a normal distribution with mean $\rho$ and standard deviation $\sqrt{\rho}$. The derivation of the multi-level CTMDP in the following subsections relies on the symmetry property of the normal distribution.

\subsection{The Multi-level CTMDP: Construction}
\label{sec4.2}
We now formally introduce the multi-level CTMDP. Given the CTMDP $M=(\cS,\cA,q,r)$ introduced earlier, we denote its corresponding multi-level CTMDP by $\Magg=(\Sagg, \Aagg, \qagg, \ragg)$, where `$\mathrm{ml}$' signifies \emph{`multi-level'}. The multi-level CTMDP $\Magg$ is created by aggregating states from $M$, where each state in $\Magg$ represents a combination of multiple states from $M$. A detailed construction of state and action in $\Magg$ follows. 

\textbf{The State Space $\Sagg$.} 
We consider aggregation by partitioning the set of feasible B- and I/W-components via suitably defined intervals. Therefore, it is natural to define the state~in $\Magg$ as $S=(B,I)$, where $B$ and $I$ are related to  \emph{the aggregated} B-components and I/W-components, respectively. 
More specifically, $B$ (resp.~$I$) refers to the \emph{index} of the interval to~which the B-component (resp.~I/W-component) of its aggregated states belongs. We consider $L$ levels for each of the B and I/W~components, for some suitably chosen natural number $L$. 
To define $B$, we partition the set of available servers $\{0,\ldots,C\}$ into $L$ disjoint subsets as follows:
\begin{align*}
\label{eq:interval_B_comp}
\{0,1,\ldots,U_1-1\}, \{U_1,\ldots,U_2-1\},\ldots,\{U_{L-1},\ldots,C\}, 
\end{align*}
where $U_1<U_2<\ldots<U_{L-1}<C$ are real numbers~that will be determined momentarily. The B-component of any state $s\!=\!(b,i)\!\in\!\cS$ belongs to one of the aforementioned sets. 
Let us index the sets with $B \in \{0,1,\ldots,L - 1\}$. Then, $B$ corresponds to the states whose B-components belong to $\{U_{B},\ldots,U_{B+1} -1\}$. In other words, $B$ corresponds to states in $\big\{s=(b,i): b\in \{U_{B},\ldots,U_{B+1}\!-1\}\big\}$. For simplicity, we consider sets of equal size. 
That is to say, $K_\mathrm B\triangleq U_{B+1}\!- U_B$ does not depend on $B$. Hence, to determine $U_1,\ldots,U_{L-1}$ (and thus, $B$), it suffices to determine $K_\mathrm B$. We shall refer to $K_\mathrm B$ as \emph{\textsc{Busy} level size}. We use the $(1 - \epsilon)$-confidence interval\footnote{$(1\!-\!\epsilon)$-confidence interval is the interval to which the B-component belongs with probability greater than $1-\epsilon$.} for some small enough $\epsilon$ (e.g., $\epsilon\!=\!0.01$) to set  $K_\mathrm B$. More precisely, recalling that the B-component in our model denotes the number of \textsc{Busy} servers, it follows a Normal distribution as explained in Section \ref{sec4.1}. Hence, the above confidence interval is derived as $\bigl[F^{-1}\!\left(\frac{\epsilon}{2};\rho\right),F^{-1}\!\big(1 \!-\!\frac{\epsilon}{2};\rho\big)\bigr]$, where $F^{-1}(\cdot;\rho)$ is the inverse of $F(\cdot;\rho)$, and $F(\cdot;\rho)$ denotes the CDF of the Poisson distribution with rate $\rho=\lambda/\mu$. 
Now, the \textsc{Busy} level size $K_\mathrm B$ is defined as:  
\begin{equation}
	K_{\mathrm B}=-\left \lceil\Big(F^{-1}\!\left(\tfrac{\epsilon}{2};\rho\right)-F^{-1}\!\big(1 \!-\!\tfrac{\epsilon}{2};\rho\big)\Big)/L\right \rceil \,,
	\label{eq5}
\end{equation}
where $\lceil\cdot\rceil$ is the ceiling function ensuring an integer level size. Further, we define  
$\underline \beta = \left\lfloor \rho-\tfrac{1}{2}K_\mathrm BL\right\rfloor^+$ and $\overline \beta = \left\lceil\rho+\tfrac{1}{2}K_\mathrm BL\right\rceil$, 
where $\lfloor\cdot\rfloor$ denotes the floor function. Hence, the endpoints $U_1,\ldots,U_{L-1}$ are obtained as\footnote{Here, we use the symmetry of the Normal distribution of the B-component around $\rho$ to define the new lower and upper endpoints of the confidence interval.}
$
U_B = B K_\mathrm B+\underline \beta, 
$
for $B=1,\ldots,L-1$. The construction above implies that all values of $b<\underline{\beta}$ are allocated to the level $B=0$, whereas those with $b\!>\!\overline \beta$ are allocated to the level $B=L\!-\!1$. 


To define the I/W-component $I$, we undertake a similar approach to partition $\{-Q,\ldots,C\}$. Specifically, we partition the positive part (i.e., $\{0,\ldots,C\}$) into $L$ subsets, each of size $K_{\mathrm I} = C/L$.\footnote{For simplicity, we use the same number of partitions as for B-component.} Then, we partition the negative part (i.e., $\{-Q,\ldots,-1\}$) into levels of size $K_\mathrm I$. Thus, the negative part of I/W-component has $\left\lceil Q/K_{\mathrm I} \right\rceil$ levels so that $I \!\in\! \{\!-\!\left\lceil Q/K_\mathrm I \right\rceil,\ldots, L\!-\!1\}$, where the level $I$ aggregates states with $i \!\in\! \{IK_\mathrm I \!+\! 1,\ldots,(I \!+\! 1)K_\mathrm I\}$.

\textbf{The Action Space $\Aagg$.} We define action space in $\Magg$ by aggregating actions in $M$. Let $\Aagg_S$ denote the set of actions available at $S=(B,I)$ in $\Magg$. Following a similar construction as states to define $\Aagg_S$ using a level size $K_\mathrm I$, we define: 
$$
\Aagg_{S}\!=\!\big\{\!-\!I,\ldots,0\big\}\cup \big\{\left(C\!-\!U_B\!-\!I^+K_\mathrm I\right)/K_\mathrm I\big\} \,,
$$
where $C \!-\! U_B \!-\! I^+K_\mathrm I$ is the number of available \textsc{Off}~servers, assuming $U_B$ servers are \textsc{Busy} and $I^+K_\mathrm I$ servers are \textsc{Idle}. When $A \!>\! 0$, $AK_\mathrm I$ servers are powered on to be in the \textsc{Setup} state; else, all \textsc{Setup} and $AK_\mathrm I$ \textsc{Idle} servers are turned off.

As mentioned earlier, the trade-off between optimality and dimensionality can be adjusted by choosing different values for $L$. In particular, a larger $L$ yields a larger (aggregated) state space while ensuring that the optimal policies in $M$ and $\Magg$ become closer in terms of expected average reward. 

\section{Transition and Reward Functions of $M^{\scriptsize}$}
\label{sec5}
This section is devoted to deriving the transition rate function $\qagg$ and reward function $\ragg$ of the multi-level CTMDP $\Magg$.
In view of the construction of $\Magg$, this task entails calculating the transition rates between various levels. 
In order to make the presentation more tractable, we begin with calculating a few key quantities that prove instrumental in formulating $\qagg$ and $\ragg$. 

\subsection{Preliminaries: Level Boundary Probabilities}
\label{sec:premilinaries}
The transition between levels occurs when the I/W-component or B-component (in $M$) are at the boundaries of their corresponding levels. Therefore, we need to derive the probability of being at the boundaries of \textsc{Busy} and I/W levels. 

We recall that $F(\cdot;\rho)$ denotes the CDF of the Poisson distribution with rate $\rho=\lambda/\mu$, which is in fact the distribution of the number of \textsc{Busy} servers as a result of the assumption of Poisson arrivals; see Section~\ref{sec4.1}. Further, let $f(\cdot;\rho)$ denote the probability mass function (pmf) of $F(\cdot;\rho)$. For brevity, we omit the dependence of $F$ and $f$ on $\rho$ as it is fixed throughout.  

Considering $s\!=\!(b,i)$, we observe that the probability that $b\in\big\{U_{B},\ldots, U_{B+1}-1\}$ (i.e., the number of \textsc{Busy} servers belongs to the $B$-th level) is $F\big(U_{B+1}-1\big) - F\big(U_{B}-1\big)$. Define
\begin{align*}
	\underline p(B)&= \frac{ f(U_B)}{F(U_{B+1}\!-\!1) - F(U_{B}-1)}, \\ 
	\overline p(B)&= \frac{ f(U_{B+1}-1)}{F(U_{B+1}\!-\!1) - F(U_{B}\!-\!1)}.
	\label{eq1112}
\end{align*}
It is evident that $\underline p(B)$ (resp.~$\overline p(B)$) is the probability that $b$ coincides with the lower (resp.~upper) boundary of the \textsc{Busy} level $B$ in $\Magg$. Furthermore, the average number of \textsc{Busy} servers at level $B$, denoted by $N_B$, is:
\begin{equation}
	N_B = \frac{1}{F\big(U_{B+1}-1\big) - F\big(U_{B}-1\big)}\sum_{x=U_{B}}^{U_{B+1}-1}x f(x)\, .
	\label{eq9}
\end{equation}

For the I/W-component, we are interested in computing the probability that $M$ is at the lower (resp.~upper) boundary of I/W level $I$ conditioned on the event that $\Magg$ is in state $S$ and action $A$ is chosen. These probabilities are denoted by $\underline u(S,A)$ and $\overline u(S,A)$, respectively, and derived in the following lemma.
\begin{figure}[!t]
	\centering
	\includegraphics[width=.8\columnwidth]{./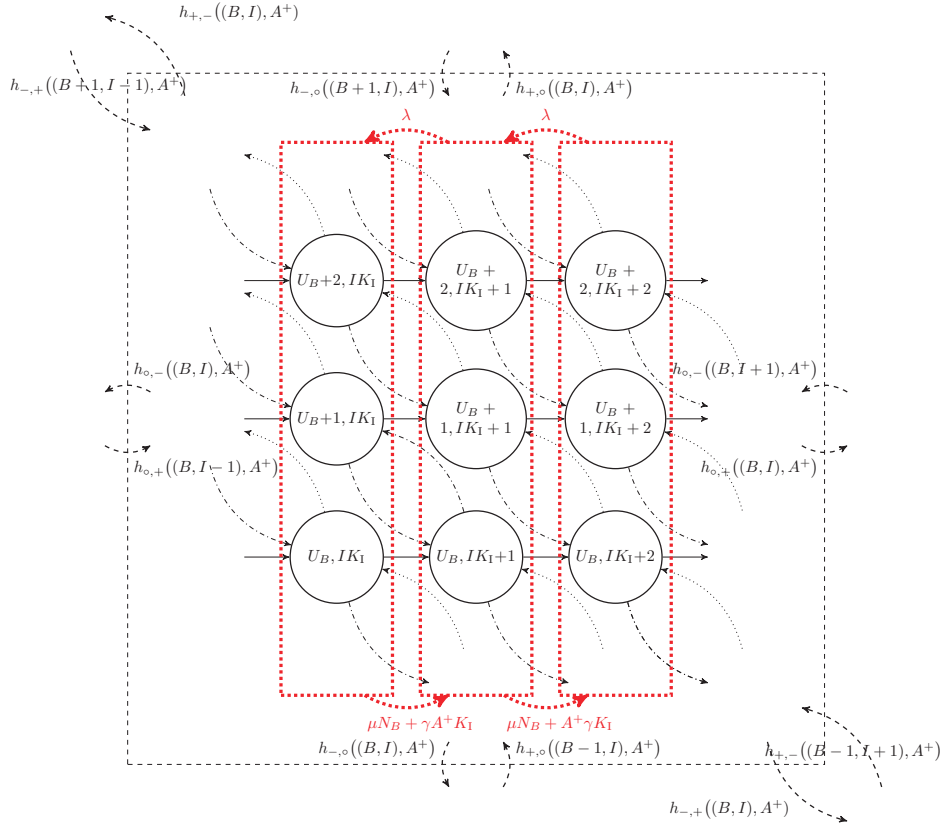}
	\vspace{-1.3em}
	\caption{Transition diagram of a multi-level state $(B,I)$ when $I \!>\! 0$ and $K_\mathrm B \!=\! K_\mathrm I \!=\! 3$. Dotted transitions correspond to a job arrival with rate $\lambda$, solid transitions correspond to a setup completion with rate $\gamma A^+ K_\mathrm I$, and dash-dotted transitions correspond to a service completion with rate $b\mu$, where $b$ is the number of \textsc{Busy} servers in the current state.}
    \label{fig2}
\end{figure}

\begin{lemma}
\label{lem:u_probabilities}
Let $\eta(S,A) = (\mu N_B+A^+K_{\mathrm I}\gamma)/\lambda$. If $\eta(S,A) \neq 1$, 
\begin{align*}
	\underline u(S,A)\!=\!\frac{1-\eta(S,A)}{1-\eta(S,A)^{K_{\mathrm I}}}\,, \,\,
    \overline u(S,A)\!=\!\eta(S,A)^{K_{\mathrm I}-1} \underline u(S,A).
\end{align*}
Furthermore, $\underline u(S,A) = \overline u(S,A) = K_\mathrm I^{-1}$ when $\eta(S,A) = 1$. 
\end{lemma}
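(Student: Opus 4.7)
The plan is to analyze the local dynamics of the I/W-component inside a single level $I$ as a birth--death chain on $K_\mathrm I$ positions, and then invoke detailed balance to read off the boundary probabilities. Within level $I$, the I/W-component $i$ ranges over $\{I K_\mathrm I + 1,\ldots,(I+1)K_\mathrm I\}$; job arrivals (rate $\lambda$) decrement $i$, while service completions (rate $b\mu$) and setup completions (rate $A^+ K_\mathrm I \gamma$) increment $i$. Under the queuing approximation of Section~\ref{sec4.1} and Assumption~\ref{assump:opt_policy}, and upon replacing the instantaneous number of \textsc{Busy} servers by its conditional mean $N_B$ from \eqref{eq9} over level $B$, the upward and downward transition rates inside the level become the constants $\mu N_B + A^+ K_\mathrm I \gamma$ and $\lambda$, respectively.

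Index the $K_\mathrm I$ sub-states of level $I$ by $j\in\{1,\ldots,K_\mathrm I\}$ with $j=1$ corresponding to the lower boundary $i = IK_\mathrm I + 1$ and $j = K_\mathrm I$ to the upper boundary $i = (I+1)K_\mathrm I$. Let $q_j$ denote the conditional probability of being at position $j$ given residence in level $I$ under action $A$. Since the reduced chain is birth--death with constant rates, detailed balance gives
$$
\lambda\, q_{j+1} = \bigl(\mu N_B + A^+ K_\mathrm I \gamma\bigr)\, q_j,
$$
so that $q_{j+1} = \eta(S,A)\, q_j$ and inductively $q_j = \eta(S,A)^{j-1} q_1$. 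When $\eta(S,A) \neq 1$, normalization via the geometric-series identity $\sum_{j=0}^{K_\mathrm I-1}\eta(S,A)^j = (1-\eta(S,A)^{K_\mathrm I})/(1-\eta(S,A))$ yields
$$
q_1 = \frac{1-\eta(S,A)}{1-\eta(S,A)^{K_\mathrm I}}, \qquad q_{K_\mathrm I} = \eta(S,A)^{K_\mathrm I - 1}\, q_1,
$$
which are exactly $\underline u(S,A)$ and $\overline u(S,A)$, respectively. For the edge case $\eta(S,A)=1$, the geometric ratio collapses so $q_j$ is constant in $j$, and normalization gives $q_j = 1/K_\mathrm I$; in particular $\underline u(S,A) = \overline u(S,A) = K_\mathrm I^{-1}$.

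I expect the main obstacle to be not the birth--death algebra, which is routine, but the justification for treating the within-level rates as constants. This requires invoking the Poisson/Normal approximation of Section~\ref{sec4.1} to replace $b\mu$ by its conditional expectation $\mu N_B$ over level $B$, and implicitly assuming that the I/W-component equilibrates on a faster timescale than level-to-level transitions so that a quasi-stationary within-level analysis is valid. Once these approximations are in force, the derivation proceeds as sketched above.
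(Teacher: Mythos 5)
Your proposal is correct and follows essentially the same route as the paper: both construct a within-level birth--death chain with constant upward rate $\mu N_B + A^+K_\mathrm I\gamma$ and downward rate $\lambda$ (obtained by averaging over the \textsc{Busy} level and ignoring cross-level transitions), apply the balance relation $q_{j+1}=\eta\, q_j$, and normalize the resulting geometric distribution over the $K_\mathrm I$ positions, treating $\eta=1$ as the uniform special case. The approximations you flag at the end (replacing $b\mu$ by $\mu N_B$ and discarding boundary transitions into and out of the level) are exactly the ones the paper invokes.
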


\begin{proof}
Let $S=(B,I)\in \Sagg$ and $A\in \Aagg$. The proof relies on constructing an approximate birth-death (BD) process associated to $(S,A)$. Recall that by construction, for each $IK_\mathrm I \!\leq\!i\!\leq\!(I\!+\!1)K_\mathrm I\!-\!1$, we aggregate all states $s=(b,i)\in \mathcal S$ with $U_B \!\leq\! b \!\leq\! U_{B+1}-1$ into one state, called \emph{meta-state $i$}. \figurename~\ref{fig2} and \figurename~\ref{fig3} show the transition probabilities of the states in the multi-level state $S=(B,I)$ when $I \!>\! 0$ and $I \!\leq\! 0$, respectively. In fact, we aggregate all states within the red dotted boxes in these two figures into one state to obtain $K_{\mathrm I}$ meta-states. Also, the arrival (resp.~departure) rate of each meta-state is equal to the average of the arrival (resp.~departure) rates of the corresponding aggregated states. When deriving these average rates, we ignore all transitions from other multi-level states into $S=(B,I)$ and vice versa. These transitions, in fact, belong to the boundary states $(b,i)\!\in\! \{U_B,\ldots,U_{B+1}-1\} \!\times\! \{IK_{\mathrm I},\ldots, (I+1)K_{\mathrm I} \!-\! 1\}$. With this approximation, the average transition rate from each meta-state $i$ to the next state $i \!+\! 1$ becomes $\gamma A^+  K_{\mathrm I}  + N_B$, while that from each meta-state $i$ to the previous state $i \!-\! 1$ is equal to $\lambda$. Since these birth and death rates are the same in all meta-states, the resulting approximated Markov chain forms a BD process with the birth and death rates equal to $\gamma A^+K_{\mathrm I}\!+\!N_B$ and $\lambda$, respectively. Let $P(i|S,A)$ denote the probability that the I/W-component (in $M$) is $i$ given that $\Magg$ is in $S\in \Sagg$ and $A\in \Aagg_S$ is taken.  Then, in the BD process, we have:
\begin{equation*}
	\lambda P(i+1|S,A) = \left(\mu N_B+\gamma A^+ K_{\mathrm I}\right) P(i|S,A) \,.
\end{equation*}
so that $P(i+1|S,A) = \eta P(i|S,A)$, where for brevity we omit the dependence of $\eta$ on $(S,A)$. 

\begin{figure}[!t]
	\centering
	\includegraphics[width=.7\columnwidth]{./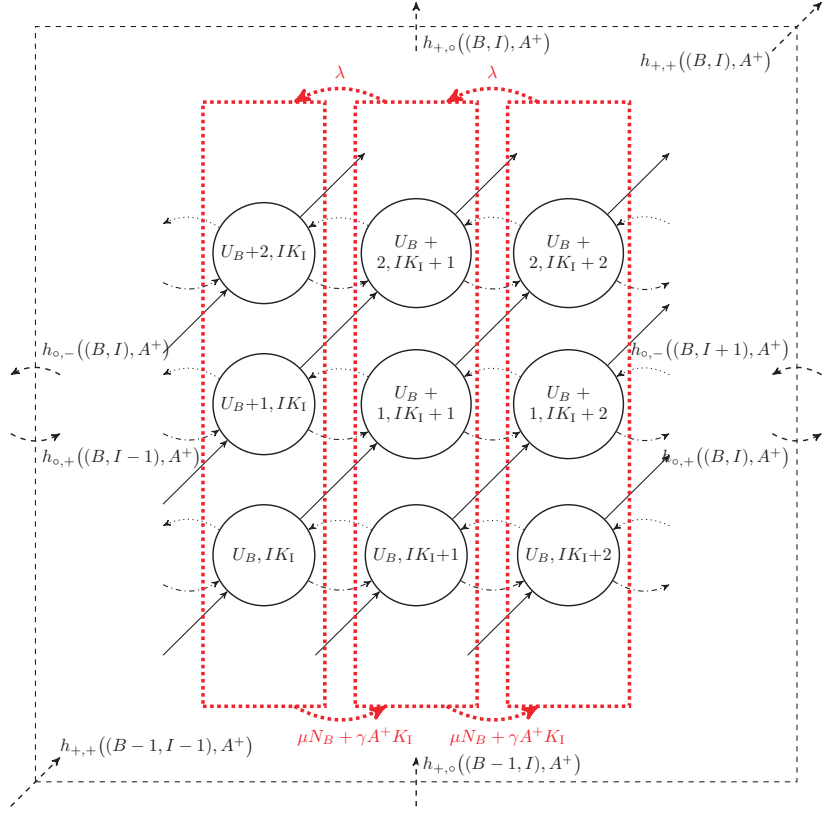}
	\vspace{-.5em}
	\caption{Transition diagram of a multi-level state $(B,I)$ when $I \!\leq\! 0$ and $K_\mathrm B \!=\! K_\mathrm I \!=\! 3$. Dotted transitions correspond to a job arrival with rate $\lambda$, solid transitions correspond to a setup completion with rate $\gamma A^+K_\mathrm I$, and dash-dotted transitions correspond to a service completion with rate $b\mu$, where $b$ is the number of \textsc{Busy} servers in the current state.}
    \label{fig3}
\end{figure}
We claim $\sum_{k=0}^{K_{\mathrm I}-1}\eta^{k}P(IK_{\mathrm I}|S,A) = 1$. This can be verified by observing that the left-hand side is the probability that $\Magg$ is at level $I$ for the given state $S$ and action $A$. Hence, for $S=(B,I)$, this happens with probability $1$ and the claim follows. Using algebraic manipulations, it then implies that for $i=IK_\mathrm I$: 
	$$
	P(IK_{\mathrm I}|S, \!A) =\!\begin{cases}
								\!1/K_{\mathrm I}; & \eta \!=\! 1, \\
								\!\frac{1-\eta}{1-\eta^{K_{\mathrm I}}}; & \text{otherwise.}
							\end{cases}
	$$
	Following a similar reasoning, we obtain  
	\begin{align*}
	P((I+1)K_{\mathrm I}-1|S&, \!A) =\begin{cases}
								\!P(IK_{\mathrm I}|S, \!A)K_{\mathrm I}; & \eta \!=\! 1, \\
								\!P(IK_{\mathrm I}|S, \!A)\frac{1-\eta^{K_{\mathrm I}}}{1-\eta}; & \text{otherwise.}
							\end{cases}
	\end{align*}
	Observing that $\underline u(S,A) = P(IK_{\mathrm I}|S, \!A)$ and $\overline u(S,A) = P((I+1)K_{\mathrm I}-1|S, \!A)$ concludes the proof. 
\end{proof}
 
We conclude this subsection by deriving the expected number of either \textsc{Idle} servers or waiting jobs under a given pair $(S,A)$, which we denote by $\overline I(S,A)$. We have:
\begin{align}
	\overline I(S,A) &= \sum_{i = K_{\mathrm I}I}^{K_{\mathrm I}(I+1)-1}i P(i|S,A) \nonumber \\
	&= \frac{1-\eta(S,A)}{1-\eta(S,A)^{K_{\mathrm I}}} \sum_{i = K_{\mathrm I}I}^{K_{\mathrm I}(I+1)-1}i\eta(S,A)^{i-K_{\mathrm I}I} \nonumber \\
	&=\underline u(S,A) \sum_{i = K_{\mathrm I}I}^{K_{\mathrm I}(I+1)-1}i\eta(S,A)^{i-K_{\mathrm I}I}
	\label{eq:I_agg_average}
\end{align}
where the second line follows from the recursive property of $P$ established in the proof of Lemma \ref{lem:u_probabilities}.

\subsection{The Transition Function $\qagg$}
\label{sec5.2}
We are now ready to fully characterize $\qagg$ using the level boundary probabilities in the preceding subsection. This entails calculating the transition rates when the \textsc{Busy} level, I/W level, or both change. In what follows, we derive the level transition probabilities resulting from the positive part of the action, i.e., $A^+$. These transitions, corresponding to the cases $I \!>\! 0$ and $I \!\leq\! 0$, are shown on the outer dotted boxes in \figurename~\ref{fig2} and \figurename~\ref{fig3}, respectively. The value of $A^-$ does not affect the \textsc{Busy} level $B$ and only changes the I/W level $I$ to $I \!+\! A^-$ instantly with probability one. Hence, its effect is incorporated into the conditions. The transition function $\qagg$ admits the following form
\begin{equation*}
	\qagg(S'|S,\!A) \!=\! \begin{cases}
						\!h_{+,\circ}(S,A^+\!); &  \!B'\!=\!B\!+\!1,I'\!=\!I\!+\!A^-\!, \\
						\!h_{-,\circ}(S,A^+\!); &  \!B'\!=\!B\!-\!1,I'\!=\!I\!+\!A^-\!, \\
						\!h_{\circ,+}(S,A^+\!); &  \!B'\!=\!B,I'\!=\!I\!+\!1\!+\!A^-\!, \\
						\!h_{\circ,-}(S,A^+\!); &  \!B'\!=\!B,I'\!=\!I\!-\!1\!+\!A^-\!, \\
						\!h_{+,+}\!(S,A^+\!); & \!B'\!=\!B\!+\!1,I'\!=\!I\!+\!1\!+\!A^-\!, \\
						\!h_{+,-}(S,A^+\!); &  \!B'\!=\!B\!+\!1,I'\!=\!I\!-\!1\!+\!A^-\!, \\
						\!h_{-,+}(S,A^+\!); &  \!B'\!=\!B\!-\!1,I'\!=\!I\!+\!1\!+\!A^-\!.
\end{cases}
\end{equation*}
where $h_{i,j}$ with $i,j\in\{-,+,\circ\}$ are (output) rate functions that will be derived momentarily. Here, the first subscript $i$ represents a change in $B$, and the second $j$ captures a change in $I$. The symbol $\circ$ indicates no change during the transition, whereas $+$ and $-$ indicate increment and decrement, respectively. For example, $h_{+,\circ}(S,A)$ denotes the transition rate from state $S\!=\!(B,I)$ to state $S'=(B\!+\!1,I)$ under  action $A$. The rest of this subsection is devoted to deriving the rate functions $h_{i,j}$. Here $|A^-|$ indicates the number of \textsc{Idle} servers that will be turned off. Thus, as an immediate result of taking action $A$, the number of \textsc{Idle} servers is reduced instantaneously by $|A^-|$, i.e., $I' \!=\! I \!+\! A^-$. Obviously, if $A \!\geq\! 0$, then $I'$ will not change as a result of $A^-$. Other changes to $I$ occur as a result of action, $A^+$, and in a probabilistic manner. 

\textbf{\textsc{Busy} Level Transition Rates ($h_{+,\circ}$ and $h_{-,\circ}$).} 
\label{sec4.4.1}
If a transition from state $S\!=\!(B,I)$ to state $S'=(B\!+\!1,I)$ under the action $A^+$ occurs, then the number of \textsc{Busy} servers should be equal to the upper boundary of the current \textsc{Busy} level $B$ (i.e., $b\!=\!U_{B+1}-1$) and increases  by one as well.  This increment occurs when a job arrives with rate $\lambda$ when $I \ge 0$ or a \textsc{Setup} server becomes \textsc{Idle} with rate $\gamma A^+K_{\mathrm I}$ when $I < 0$. In both cases, the I/W-component remains unchanged if that of $M$ is different than the lower and upper boundaries of the current I/W-component, respectively. Hence, 
\begin{equation}
	h_{+,\circ}(S,A^+)	\!=\!	\begin{cases}
								\!\lambda \overline p(B)\left(1 \!-\! \underline u(S,A)\right); & I \!\ge\! 0, \\
								\!\gamma A^+K_{\mathrm I}  \overline p(B) \left(1 \!-\! \overline u(S,A)\right); & I \!<\! 0.
							\end{cases}
	\label{eq19}
\end{equation}

The transition rate from $S\!=\!(B,I)$ to $S'=(B\!-\!1,I)$ under $A^+$ is captured by $h_{-,\circ}(S,A^+)$. When $I \!\geq\! 0$, such a transition can arise only if the number of busy servers is equal to the lower boundary of the current \textsc{Busy} level $B$ (with probability $\underline p(B)$), the number of \textsc{Idle} servers is not equal to the upper boundary of the current I/W level $I$ (with probability $1-\overline u(S,A)$), and a \textsc{Busy} server turns \textsc{Idle} (with rate $\mu U_B$). We thus have:
\begin{equation}
	h_{-,\circ}(S,A^+)	\!=\!	\begin{cases}
								\mu U_B \underline p(B)\left(1 \!-\! \overline u(S,A)\right); & I \ge 0, \\
								0\,; & I < 0.
							\end{cases}
	\label{eq20}
\end{equation}

\textbf{I/W Level Transition Rates ($h_{\circ,+}$ and $h_{\circ,-}$).} 
\label{sec4.4.2}
Introduce
\begin{equation}
	N_B^- = \frac{1}{F\big(U_{B+1}-1\big) - F\big(U_{B}-1\big)}\sum_{x=U_B+1}^{U_{B+1}-1}x f(x).
	\label{eq21}	 	
\end{equation}
In fact, $N_B^-$ captures the average number of \textsc{Busy} servers in level $B$ without including the lower boundary of the level. Transitions from $S\!=\!(B,I)$ to $S'=(B,I\!+\!1)$ under $A^+$ occurs at rate:
\begin{equation*}
	h_{\circ,+}(\!S,\!A^+\!)\!=\!	\begin{cases}
							\!\overline u(S,\!A) \!\big(\gamma A^+\!K_{\mathrm I} \!+\! (\!1\!-\!\underline p(B))\mu N_B^-\!\big); & \!\!\!I \!\ge\! 0,\\
							\!\overline u(S,\!A) \!\big(\mu N_B \!+\! (\!1\!-\!\overline p(B)) \gamma A^+\!K_{\mathrm I}\big);  & \!\!\!I \!<\!0.
						\end{cases}
\end{equation*}
To verify this, observe that when $I \!\geq\! 0$, $I$ increases by one if $i$ is equal to the upper boundary of current I/W level (with probability $\overline u(S,A)$) and increases by one as well (see \figurename~\ref{fig2}). Also, $i$ increases if either a \textsc{Setup} server becomes \textsc{Idle} with rate $\gamma A^+K_{\mathrm I}$ or a \textsc{Busy} server turns \textsc{Idle}. In the latter, to remain in the same \textsc{Busy} level, the number of busy levels should not be equal to the lower boundary of the current \textsc{Busy} level. Thus, the rate of having an \textsc{Idle} server in the latter case is equal to $(1-\underline p(B))\mu N_B^-$. In the second case of $I\leq 0$, $I$ increases if $i$ is equal to the upper boundary of I/W level and increases by one (see \figurename~\ref{fig3}). This occurs when either a \textsc{Busy} server turns \textsc{Idle} with rate $\mu N_B$ or a \textsc{Setup} server becomes \textsc{Idle} with rate $\gamma A^+ K_{\mathrm I}$, thus serving a waiting job. In the latter, a \textsc{Setup} server becomes \textsc{Busy} and thus, to have the same \textsc{Busy} level, $b$ should not be equal to the upper boundary of the current \textsc{Busy} level.

Now we turn to deriving $h_{\circ,-}(S,A^+)$, which captures the transition rate from  $S\!=\!(B,I)$ to $S'=(B,I\!-\!1)$ under $A^+$.~Note that $I$ decreases by one if a job arrives with rate $\lambda$ and $i$ is equal to the lower boundary of I/W level. Then, if $I\!<\!0$, $B$ remains unchanged since the newly arrived job increases the number of waiting jobs. However, if $I\!\geq\!0$, then the new job turns an \textsc{Idle} server \textsc{Busy}; thus, in order to remain in the same \textsc{Busy} level, the number of \textsc{Busy} servers should not be equal to the upper boundary of level $B$. We thus get:
\begin{equation}
	h_{\circ,-}(S,A^+) =	\begin{cases}
								\left( 1- \overline p(B)\right)\lambda \underline u(S,A);  & I \ge 0, \\
								\lambda \underline u(S,A);  & I < 0.
							\end{cases}
	\label{eq23}
\end{equation}

\textbf{Joint \textsc{Busy} and {I/W} Level Transition Rates ($h_{+,+}$, $h_{+,-}$ and $h_{-,+}$).} 
\label{sec4.4.3}
Transition from $S\!=\!(B,I)$ to $S'=(B + 1,I - 1)$ under $A^+$ occurs with rate $\lambda$ only when $I \!\geq\! 0$ and the values of \textsc{Busy} and \textsc{Idle} servers are equal to the upper and lower~boundaries of the corresponding levels, respectively. It thus transpires with a rate:  
\begin{equation}	 
	h_{+,-}(S,A^+) =	\begin{cases}
									\lambda \overline p(B)\underline u(S,A); & I \ge 0, \\
									0; & I<0.
								\end{cases}
	\label{eq24}
\end{equation}
On the other hand, $S\!=\!(B,I)$ transits to $S'=(B-1,I+1)$ under $A^+$ only when $I \!\geq\! 0$ and the number of \textsc{Busy} and \textsc{Idle} servers are equal to the lower and upper boundaries of the corresponding levels, respectively, which further yields:
\begin{equation}		
	h_{-,+}(S,A^+) =	\begin{cases}
									U_B\mu \underline p(B) \overline u(S,A); & I \ge 0, \\
									0; & I<0.
\end{cases}
	\label{eq25}
\end{equation}

Finally, transition from $S\!=\!(B,I)$ to   $S'=(B\!+\!1,I\!+\!1)$ under $A^+$ happens at rate:
\begin{equation}
	h_{+,+}(S,A^+) =	\begin{cases}
									\gamma A^+K_{\mathrm I} \overline p(B) \overline u(S,A); & I < 0, \\
									0; & I \ge 0.
								\end{cases}
	\label{eq26}
\end{equation}

\subsection{The Reward Function $\ragg$}
\label{sec5.3}
Let us define the rate function $\Psiagg$ associated to $\qagg$ as:
\begin{equation}
	\Psiagg\!=\! h_{+,\circ}\!+\!h_{-,\circ}\!+\!h_{\circ,+}\!+\!h_{\circ,-}\!+\!h_{-,+}\!+\!h_{+,-}\!+\!h_{+,+}.
\end{equation}
It is evident that $1/\Psiagg(S,A)$ is the average time spent at state $S$ under action $A$.
Hence, the reward in state $S$ under action $A$~is:
\begin{align*}
    	\ragg\left(S,A\right)
    	&\!=\! \frac{-1}{\Psiagg(S,\!A)}\Big(\! c_{\text{\scriptsize perf}} \lvert\overline I(S,A)^-\rvert + c_{\text{\scriptsize power}} {\overline I}(S,A)^{+\!} + c_{\text{\scriptsize power}}'  K_{\mathrm I}A^{+\!} \Big),
\end{align*}
where $\overline I(S,A)$ is defined in (\ref{eq:I_agg_average}).

\subsection{Solving Multi-level CTMDP}
\label{sec5.4}
Armed with the characterization of $\Magg$, we can derive a similar result to Lemma \ref{lem:CTMDP_time} for $\Magg$. 

\begin{lemma}
\label{lem:CTMDP_time_Magg}
The (per-step) time complexity of solving  the MDP
associated to $\Magg$ using value iteration is $O(L^2(Q/K_\mathrm I+L^2)(Q/K_\mathrm I+L))$.
\end{lemma}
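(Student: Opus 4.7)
The plan is to mirror the proof of Lemma~\ref{lem:CTMDP_time} almost verbatim, with the roles of $C$ and $Q$ played by the aggregated counterparts $L$ and $\lceil Q/K_\mathrm I\rceil$, respectively. By Puterman, the per-iteration cost of value iteration on the DTMDP associated to $\Magg$ is $O\!\bigl(|\Sagg|\sum_{S\in\Sagg}|\Aagg_S|\bigr)$, so the task reduces to bounding the two factors separately. The relevant ingredients are the explicit descriptions of $\Sagg$ and $\Aagg_S$ given in Section~\ref{sec4.2}.

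The first step is to partition $\Sagg$ according to the sign of the aggregated I/W-component, $\Sagg_-=\{(B,I)\in\Sagg:I\le 0\}$ and $\Sagg_+=\{(B,I)\in\Sagg:I>0\}$, exactly as for the basic CTMDP. For $\Sagg_-$, the level $I$ ranges over $\{-\lceil Q/K_\mathrm I\rceil,\ldots,0\}$ while $B\in\{0,\ldots,L-1\}$, so $|\Sagg_-|=L(\lceil Q/K_\mathrm I\rceil+1)$. Invoking the bulk-setup simplification, $\Aagg_S=\{0\}\cup\bigl\{(C-U_B)/K_\mathrm I\bigr\}$ has two elements, and hence $\sum_{S\in\Sagg_-}|\Aagg_S|=2L(\lceil Q/K_\mathrm I\rceil+1)$.

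The second step handles $\Sagg_+$. For each $I\in\{1,\ldots,L-1\}$, the admissible range of $B$ is determined by the feasibility condition $U_B+IK_\mathrm I\le C$, yielding $L-I$ values of $B$ in direct analogy with the $C-i+1$ count in Lemma~\ref{lem:CTMDP_time}; thus $|\Sagg_+|=\sum_{I=1}^{L-1}(L-I)=L(L-1)/2$. The action set $\Aagg_S=\{-I,\ldots,0\}\cup\bigl\{(C-U_B-IK_\mathrm I)/K_\mathrm I\bigr\}$ has cardinality $I+2$, so $\sum_{S\in\Sagg_+}|\Aagg_S|=\sum_{I=1}^{L-1}(L-I)(I+2)=L(L-1)(L/6+4/3)=O(L^3)$, via the same arithmetic that appears at the end of the proof of Lemma~\ref{lem:CTMDP_time}.

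Combining the two cases gives $|\Sagg|=O\!\bigl(L(Q/K_\mathrm I+L)\bigr)$ and $\sum_S|\Aagg_S|=O\!\bigl(L(Q/K_\mathrm I+L^2)\bigr)$, whose product is exactly $O\!\bigl(L^2(Q/K_\mathrm I+L^2)(Q/K_\mathrm I+L)\bigr)$, as claimed. I do not anticipate any real obstacle here; the only point requiring mild care is honouring the feasibility coupling between $B$ and $I$ in $\Sagg_+$ (so that the counts are stated precisely rather than loosely), but even with the cruder bound $|\Sagg_+|\le L^2$ obtained by dropping the coupling, the same $O$-estimate goes through since $L(L-1)/2=\Theta(L^2)$.
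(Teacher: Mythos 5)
Your proposal is correct and follows essentially the same route as the paper: partition $\Sagg$ by the sign of $I$, count $2$ actions per state with $I\le 0$ and $I+2$ actions per state with $I>0$, and multiply $|\Sagg|$ by $\sum_S|\Aagg_S|$. The only difference is that you enforce the feasibility coupling between $B$ and $I$ to get $|\Sagg_+|=L(L-1)/2$, whereas the paper simply takes the worst-case bound $|\Sagg_+|=L^2$; as you note yourself, both give $\Theta(L^2)$ and the same final estimate (your closed form for $\sum_{I=1}^{L-1}(L-I)(I+2)$ has a harmless off-by-one in the constant, but the $O(L^3)$ conclusion stands).
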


\begin{proof}
The proof follows a similar argument as in the proof of Lemma \ref{lem:CTMDP_time}. Each iteration in value iteration costs $O(|\mathcal S|\sum_{S\in\Sagg}|{\Aagg_S}|)$. Introduce $\Sagg_+:=\{(B,I)\in \Sagg, I>0\}$ and $\Sagg_-:=\{(B,I)\in \Sagg, I\le 0\}$ so that $\Sagg_-\cup \Sagg_+ = \Sagg$. Furthermore,  $\sum_{S\in\Sagg}|{\Aagg_S}| = \sum_{S\in\Sagg_+}|{\Aagg_S}| + \sum_{S\in\Sagg_-}|{\Aagg_S}|$. 
\begin{itemize}
    \item \textit{When $S=(B,I)\in \Sagg_-$:} Recalling that $B \in \{0,\ldots,L-1\}$ and $-\lfloor Q/K_\mathrm I \rfloor \leq I \leq -1$, we have $|\Sagg_-|=L \lfloor Q/K_\mathrm I \rfloor$. Furthermore, $\sum_{s\in \Sagg_-} |\Aagg_S|=2L\lfloor Q/K_\mathrm I \rfloor$ since there are two possible actions in $S$.
    \item \textit{When $S=(B,I)\in \Sagg_+$:} Since $0 \leq I \leq L-1$, $B$ can take different values depending on the value of $I$. But we assume the worst case where $B$ can take all values in $\{0,1,\cdots,L-1\}$ at any value of $I$. Then, we have $|\Sagg_+|=L^2$. Also, is such states, we have $|\Aagg_S|=I+2$.  Consequently, $\sum_{S\in\Sagg_+} |\Aagg_S|=\sum_{I=0}^{L-1}L(I+2)=L^2(L/2+3/2)$. 
\end{itemize}  
Consequently, $|\Sagg| \sum_{S \in \Sagg} |\Aagg_S| = L^2 (L+\lfloor Q/K_\mathrm I \rfloor)(L^2/2+2\lfloor Q/K_\mathrm I \rfloor+3L/2)$, leading to the approximate complexity $O(L^2(L^2+Q/K_\mathrm I)(Q/K_\mathrm I+L))$. This completes the proof.
\end{proof}


\section{Simulation Results and Discussions}
\label{sec6}
In this section, we assess the efficacy of our multi-level CTMDP using numerical experiments. Since our proposed approach assumes known and fixed parameters, it is considered an offline optimization method. As a result, we do not utilize real traces with time-varying rates for performance evaluation in this paper.
This aligns with previous analytical studies on multi-server scenarios with setup time (\cite{Gandhi2010,phung2019delay, Gandhi2013, Tuan2017, Longo2011, Wang2015, Chang2016, Esa2018, Maccio2018}), which evaluate their methods using events generated from distribution functions with known parameters. Moreover, existing traces, to the best of our knowledge, do not provide information about server setup times, which are fundamental components of the system model in this paper. To evaluate~the~performance,~we compare the results with the \textit{staggered threshold} and \textit{bulk setup} policies, using parameters from \cite{Maccio2018}, and the \textit{uniform state-aggregation} method in \cite{Ren2002}. The equivalent (discrete-time) MDPs are solved using linear programming methods for multi-chain MDPs \cite{puterman2014markov} and the Gurobi Java plugin \cite{bixby2011gurobi}. Henceforth, we assume $c_{\text{\scriptsize power}} =1$ and $c'_{\text{\scriptsize power}} =2$, indicating that each \textsc{Setup} server consumes twice the power of an \textsc{Idle} server.
\begin{figure}[!t]
	\centering
	\includegraphics[width=.75\columnwidth]{./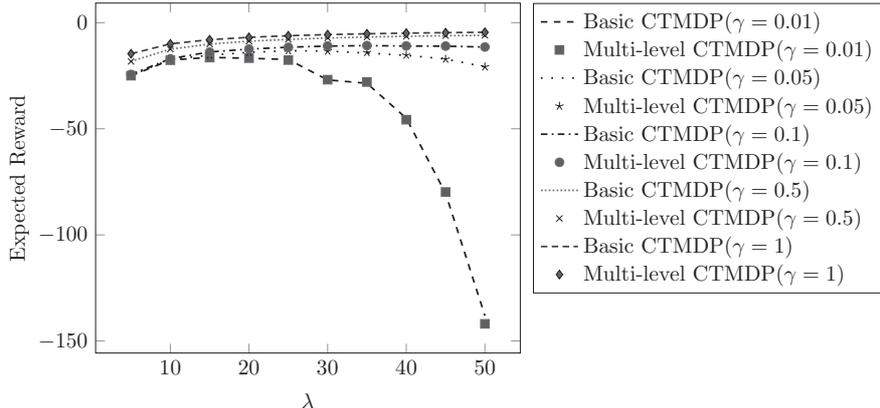}
	\vspace{-1.4em}
	\caption{Validation of the Multi-level CTMDP model $\Magg$ with respect to the basic CTMDP model $M$ for $L\!=\!C \!=\!100$.}
	\label{fig4}
	\vspace{-0.5em}
\end{figure}

\figurename~\ref{fig4} compares the multi-level and basic CTMDPs assuming $Q \!=\!C\!=\! 100$, where we set $L \!=\!C\!=\!100$ (thus, $K_\mathrm B\!=\!K_{\mathrm I}\!=\!1$). The optimal expected average reward is plotted versus the arrival rate $\lambda$, where $\mu\!=\!1$, $c_{\text{\scriptsize perf}}\!=\!50$ are fixed. As the figure shows, the multi-level and basic CTMDPs exhibit exactly the same performance for varying setup and arrival rates.

Two fixed-threshold methods, namely staggered threshold and bulk setup policies \cite{Maccio2018}, have been reported in the literature for power management in multi-server systems with setup times.~We denote them by $\pi_{\text{stag}}$ and $\pi_{\text{bulk}}$, respectively. 
Both policies use a threshold parameter $C_\text{s}$, called `\textit{static \textsc{On}}' servers, which~represents the number of servers that should always be powered on. 
Mathematically, they are defined as \cite{Maccio2018}:
\begin{equation}
	\pi_{\text{bulk}}(b,i) = \begin{cases}
	C_\text{s}\!-\!b\!-\!i^+; & b+\!i^+ \!\le\! C_\text{s}, \\
	\left(C_\text{s}\!-\!b\right)^+\!-\!i; & b\!+\!i \!>\! C_\text{s}, i \!>\! -k, \,\,\\
	C\!-\!b;~~ &  b\!>\!C_\text{s}, i\!\le\!-k.				
	\end{cases}
	\label{31}
\end{equation}
Here, $\pi_{\text{stag}}$ is the same as $\pi_{\text{bulk}}$, except that $\pi_{\text{stag}}(b,i)=|i|$ when $b+i>C_\text{s}$ and $i\!\ge\!0$. In both policies, greater values of $k$ means higher priority of power over delay. Here, we set the threshold $k=1$ to get the highest priority of delay over power. We consider $C_\text{s} \!=\! \rho +\! \sqrt{\rho}$, which is shown in \cite{Maccio2018} to be the optimal value $C_\text{s}$.
\begin{figure*}[!t]
	\centering
	\includegraphics[width=\textwidth]{./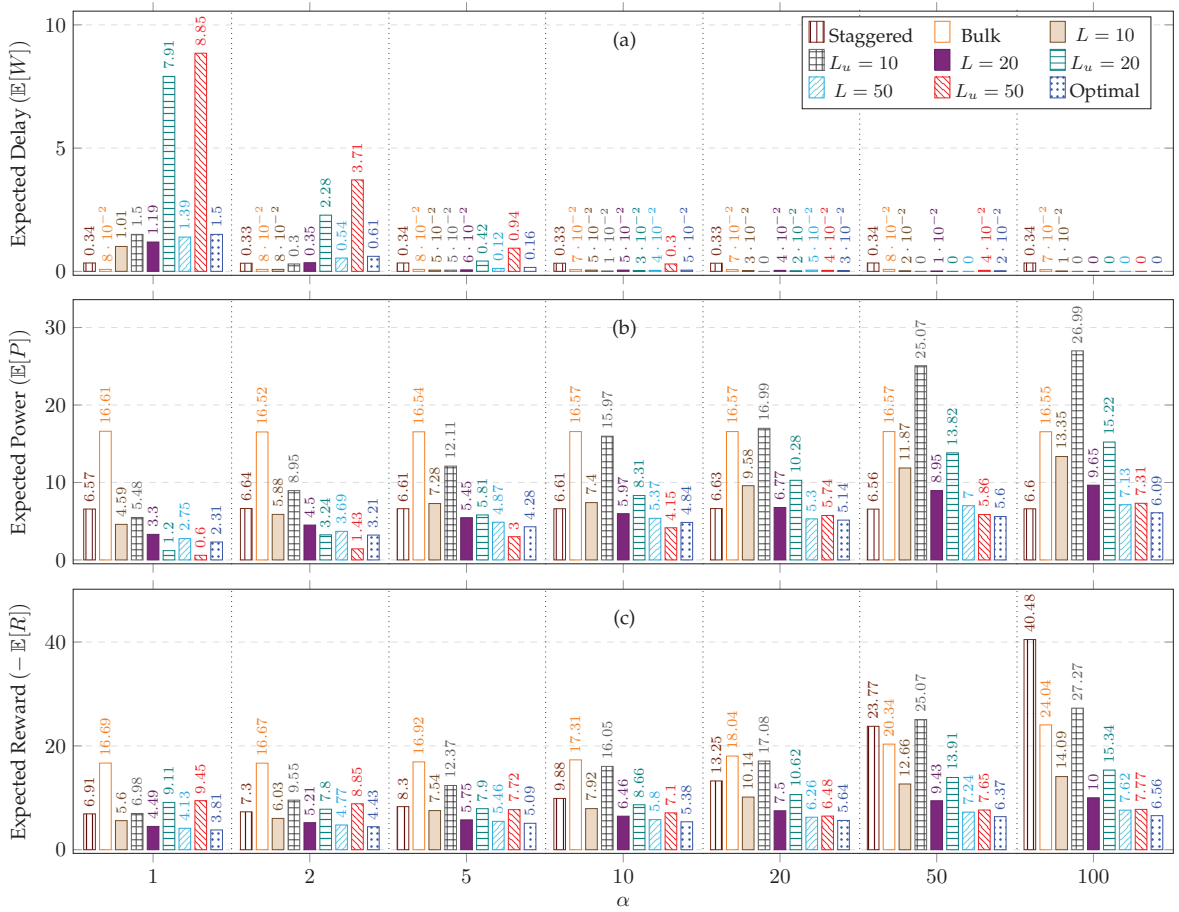}
	\vspace{-0.7em}
	\caption{Expected delay, power, and reward per time unit for varying $c_{\text{\scriptsize perf}}$ values under different policies ($\gamma=2,\lambda=30$).}
	\label{fig5}
\end{figure*} 
\begin{figure*}[!t]
	\centering
	\includegraphics[width=\textwidth]{./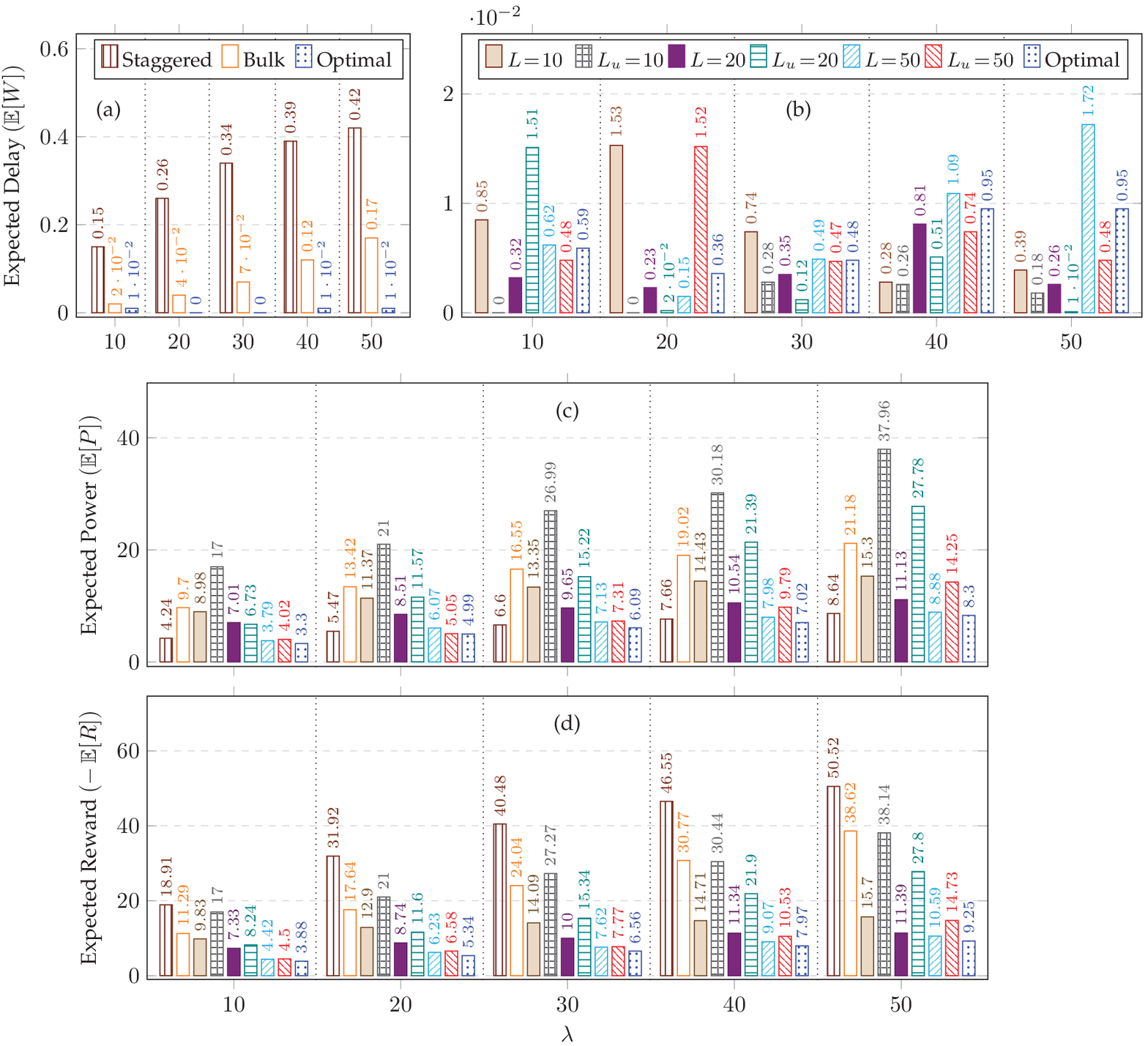}
	\vspace{-0.7em}
	\caption{Expected delay, power, and reward per time unit for varying $\lambda$ values under different policies ($\gamma=2,c_{\text{\scriptsize perf}}=100$).}
	\label{fig6}
\end{figure*}

We also compare our CTMDP model with the \textit{uniform state-aggregation} approach, which derives the reward and transition rate of a meta-state by averaging those of the corresponding aggregated states. Assuming $L_u$ levels in this method, we have $K \!=\! K_\mathrm B \!=\! K_{\mathrm I} \!=\! C/L_u$. Hence, under this method, $A = x$ corresponds to $a=xK$ in $M$, and for $S,S'\in \Sagg$, we have:
\begin{align*}
	&q(S'|S,A)=\frac{1}{K^4}\sum_{w=0}^{K-1}\sum_{x=0}^{K-1}\sum_{y=0}^{K-1}\sum_{z=0}^{K-1}q\big((B'K\!+\!w,I'K\!+\!x)|(BK\!+\!y,IK\!+\!z),AK\big) .
\end{align*}


In what follows, terms $\E[W]$ and $\E[P]$ refer to the average delay and average power of the policies per time unit, respectively. However, in regard to the definition of the reward~in~\eqref{eq4}, the term $\E[W]$ essentially denotes the average number of waiting jobs ($|i^-|$), and $\E[P]$ is the average weighted sum of the number of \textsc{Idle} and \textsc{Setup} servers ($c_{\text{\scriptsize power}}i^+ \!+\! c_{\text{\scriptsize power}}'a^+$). \figurename~\ref{fig5} compares the policies discussed in terms of $\E[W]$, $\E[P]$, and expected reward ($\E[R]$) for values of $c_{\text{\scriptsize perf}}$ taken from $\{1,2,5,10,20,50,100\}$ and $\lambda\!=\!30$, $\mu\!=\!1$, $\gamma\!=\!2$. Since $\pi_{\text{bulk}}$ turns on all \textsc{Off} servers whenever the number of waiting jobs is greater than the given threshold $k$, it prioritizes the delay~as compared to $\pi_{\text{stag}}$ and thus, consumes more power and the jobs receive service with less delay (see \figurename~\ref{fig5}a and \figurename~\ref{fig5}b). Furthermore, for larger $c_{\text{\scriptsize perf}}$ values ($c_{\text{\scriptsize perf}} \!\ge\! 50$), which indicate delay being prioritized over power, $\pi_{\text{bulk}}$ results in higher reward since it prioritizes delay, while for smaller $c_{\text{\scriptsize perf}}$ values, $\pi_{\text{stag}}$ outperforms $\pi_{\text{bulk}}$.  Since both $\pi_{\text{bulk}}$ and $\pi_{\text{stag}}$ are independent of $c_{\text{\scriptsize perf}}$, the power and delay in these methods do not change for different $c_{\text{\scriptsize perf}}$ values. Moreover, it is evident in \figurename~\ref{fig5}c that the absolute value reward of the multi-level CTMDP decreases with $L$ since at larger values of $L$, we have a more accurate model. On the other hand, based on the dimensionality analysis made in Section \ref{sec4}, for $L=50$, $L=20$ and $L=10$, the size of multi-level CTMDP is $32$, $3125$, and $100,000$ times smaller than the optimal CTMDP, respectively. Moreover, even for the smallest $L$ value ($L\!=\!10$) in our experiment, the reward achieved from multi-level CTMDP is better than both $\pi_{\text{bulk}}$ and $\pi_{\text{stag}}$. It should also be noted in \figurename~\ref{fig5}c that the reward of the uniform state-aggregation method is at most equal to that of the multi-level CTMDP. This shows that our method offers a better approximation of the basic CTMDP than the  uniform state-aggregation method.

For varying values of $\lambda \!\in\! \{10,20,30,40,50\}$, $\E[W]$, $\E[P]$, and $\E[R]$ of the different policies are compared in \figurename~\ref{fig6}. The parameters $c_{\text{\scriptsize perf}}\!=\!100$, $\mu\!=\!1$, and $\gamma\!=\!2$ are set to be fixed. As shown in \figurename~\ref{fig6}a and \figurename~\ref{fig6}b, the mean delays of both $\pi_{\text{bulk}}$ and $\pi_{\text{stag}}$ are much larger (about 100 times) than optimal and multi-level approaches. The power and delay of $\pi_{\text{bulk}}$ and $\pi_{\text{stag}}$ increase with $\lambda$ in \figurename~\ref{fig6}b and \figurename~\ref{fig6}c, thus leading to lower rewards as depicted in \figurename~\ref{fig6}d. Indeed, by increasing $\lambda$, the traffic density of the system increases, which results in longer average waiting time. On the other hand, more servers will be in \textsc{Setup} state to deal with higher traffic which leads to more energy consumption too. However, such a monotonic increase in power and delay cannot be observed in CTMDP-based approaches since they minimize the weighted sum of power and performance penalties and thus, improving one component may affect the other component for different $\lambda$ values. Nevertheless, it is~apparent in \figurename~\ref{fig6}d that the reward decreases with $\lambda$ in CTMDP-based approaches.

Finally, \figurename~\ref{fig7} compares the mean delay, power, and reward calculated for different values of $\gamma \!\in\! \{0.1,0.5,1,2,5\}$, where $c_{\text{\scriptsize perf}}=100$, $\mu=1$, and $\lambda=30$. In \figurename~\ref{fig7}a and \figurename~\ref{fig7}b, it can~be observed that for bulk setup and staggered threshold policies, increasing $\gamma$ results in the decrease of the delay, because the setup process finishes faster and the jobs experience less delay. The power also drops with $\gamma$ as shown in \figurename~\ref{fig7}c since decrease in the \textsc{Setup} delay brings about reduction in the number of \textsc{Setup} servers. Similar to \figurename~\ref{fig6}, for CTMDP-based approaches, such a monotonic decrease cannot be observed for delay and power separately, but the increase in reward with respect to $\gamma$ is clearly evident in \figurename~\ref{fig7}d.
\begin{figure*}[!t]
	\centering
	\includegraphics[width=\textwidth]{./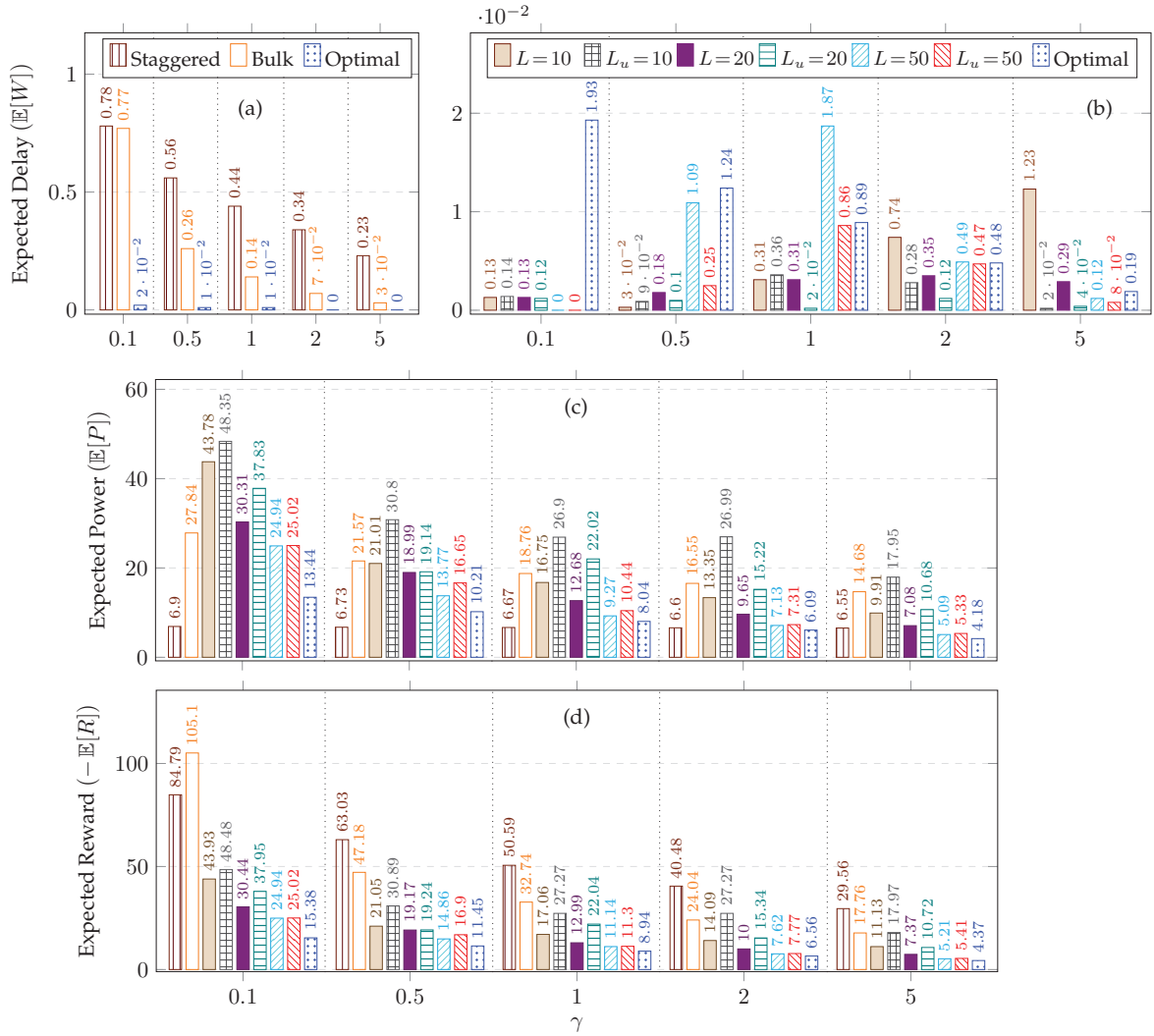}
	\vspace{-0.7em}
	\caption{Expected delay, power, and reward per time unit for varying $\gamma$ values under different policies ($\lambda=30,c_{\text{\scriptsize perf}}=100$).}
	\label{fig7}
\end{figure*}

\section{Conclusion}
\label{sec7}
We have presented a multi-level CTMDP as an approximate model for power management in large scale cloud data centers with setup time. The multi-level CTMDP is derived using a novel state aggregation technique that exploits the intrinsic structure of the model. It is fully characterized under mild assumptions and approximations and is shown to admit a significantly smaller state-action space than the exact model, which makes it a viable solution to remedy the curse of dimensionality in large-scale systems. 
Through numerical simulations, we demonstrated that the resulting power management policies are superior to existing fixed threshold methods.  
As future work, it would be intriguing to extend this model by explicitly incorporating power consumption for physical machines and considering the server power switching cost, commonly referred to as the \textit{wear-and-tear} cost. Another promising research direction is to investigate power management using derived models within an online reinforcement learning setting, such as the approaches presented in \cite{jaksch2010near,bourel2020tightening}, where system parameters are unknown.

\section*{Acknowledgment}
This work was supported by the University of Tehran and the Institute for Research in Fundamental Sciences under grant number CS1399-2-02, and in part by the Social Policy Grant (SPG) funded by Nazarbayev University, Kazakhstan.

\bibliographystyle{unsrt}
\bibliography{myreflist}

\end{document}